\DeclareMathOperator{\vbl}{\mathrm{vbl}}
\DeclareMathOperator{\Tr}{\mathcal{T}}
\DeclareMathOperator{\E}{\mathsf{E}}
\DeclareMathOperator{\size}{\mathrm{size}}
\newtheorem{theorem}{Theorem}
\newtheorem{proposition}{Proposition}
\newtheorem{lemma}{Lemma}
\newtheorem{corollary}{Corollary}
\title{Probabilistic Constructions of Computable Objects and a Computable Version of Lov\'asz Local Lemma.}
\author{Andrei Rumyantsev, Alexander Shen\thanks{LIRMM CNRS \& UM~2, Montpellier; on leave from IITP RAS, Moscow. Supported by ANR NAFIT-008-01/2 grant. E-mail: \texttt{azrumuyan@gmail.com}, \texttt{alexander.shen@lirmm.fr}}}
\begin{document}
\maketitle

\begin{abstract}
A nonconstructive proof can be used to prove the existence of an object with some properties without providing an explicit example of such an object. A special case is a probabilistic proof  where we show that an object with required properties appears with some positive probability in some random process. Can we use such arguments to prove  the existence of a \emph{computable} infinite object? Sometimes yes: following~\cite{rumyantsev-positive}, we show how the notion of a layerwise computable mapping can be used to prove a computable version of Lov\'asz local lemma.
\end{abstract}

\section{Probabilistic generation of infinite sequences}

We want to show that one can use probabilistic arguments to prove the existence of inifinite objects (say, infinite sequences of zeros and ones) with some properties. So we should discuss first in which sense a probabilistic algorithm can generate such a sequence. The most natural approach: consider a Turing machine without input that has an (initially empty) work tape and a write-only output tape where machine prints bits sequentially. Using fair coin tosses, the machine generates a (finite or infinite) sequence of output bits. The output distribution of such a machine $T$ is some probability distribution $Q$ on the set of all finite and infinite sequences. Distributions on this set are determined by their values on cones $\Sigma_x$ (of all finite and infinite extensions of a binary string $x$). The function $q(x) = Q(\Sigma_x)$ that corresponds to the distribution $Q$ satisfies the following conditions:
    $$ q(\Lambda)=1;\quad q(x)\ge q(x0)+q(x1) \text{ \ for all strings $x$}.$$
(Here $\Lambda$ denotes the empty string.) Any non-negative real function that satisfies these conditions corresponds to some probability distribution on the set of finite and infinite binary sequences. The output distributions of probabilistic machines correspond to functions $q$ that are \emph{lower semicomputable}; this means that some algorithm, given a binary string $x$, computes a non-decreasing sequence of rational numbers that converges to $q(x)$.

Now we are ready to look at the classical result of de Leeuw -- Moore -- Shannon -- Shapiro: \emph{if some individual sequence $\omega$ appears with positive probability as an output of a probabilistic Turing machine, this sequence is computable}. Indeed, let this probability be some $\varepsilon>0$; take a rational threshold $r$ such that $r<\varepsilon<2r$, and consider some prefix $w$ of $\omega$ such that $q(w)<2r$. (Such a prefix exists, since $q(x)$ for prefixes of $\omega$ converges to $\varepsilon$.)
Starting from $w$, we can compute the next prefix of $\omega$ by finding a son where $q$ exceeds $r$. The correct son satisfies this condition, and no branching is possible: if for two sons the value exceeds $r$, then it would exceed $2r$ for the father.

This result can be interpreted as follows: \emph{if our task is to produce some specific infinite sequence of zeros and ones, randomization does not help} (at least if we ignore the computational resources). However, if our goal is to produce \emph{some} sequence with given properties, randomization can help. A trivial example: to produce a noncomputable sequence with probability $1$ it is enough to output the random bits.

All these observations are well known, see, e.g., the classical paper of Zvonkin and Levin~\cite{zvonkin-levin}. For a less trivial example, let us consider another result (proved by N.V.~Pet\-ri) mentioned in this paper: \emph{there exists a probabilistic machine that with positive probability generates a sequence $\omega$ such that \textup{(1)}~$\omega$ contains infinitely many ones; \textup{(2)}~the function $i\mapsto\text{the position of $i$-th one in $\omega$}$ has no computable upper bound}. (In the language of recursion theory, this machine generates with positive probability a characteristic sequence of a hyperimmune set.) A nice proof of this result was given by Peter Gacs; it is reproduced in the next section (we thank M.~Bondarko for some improvements in the presentation).

\section{Fireworks and hyperimmune sequences}

Consider the following puzzle. We come to a shop where fireworks are sold. After we buy one, we can test it in place (then we know whether it was good or not, but it is not usable anymore, so we have to buy a new one after that), or go home, taking the untested firework with us. We look for a probabilistic strategy that with 99\% probability wins in the following sense: it \emph{either finds a bad firework during the testing} (so we can sue the shop and forget about the fireworks) \emph{or takes home a good one}.

Here is the solution: \emph{take a random number $k$ in $0..99$ range, make $k$ tests \textup(less if the failure happens earlier\textup{);} if all $k$ tested fireworks were good, take home the next one}. To prove that it wins with $99\%$ probability, note that the seller does not get any information from our behavior: he sees only that we are buying and testing the fireworks; when we take the next one home instead of testing, it is too late for him to do anything. So his strategy is reduced to choosing some number $K$ of good fireworks sold before the bad one. He wins only if $K=k$, i.e., with probability at most~$1\%$.

Another description of the same strategy: we take the first firework home with probability $1/100$ and test it with probability $99/100$; in the second case, if the firework was good, we take the second one,  bringing it home with probability $1/99$ and testing it with probability $98/99$, etc.

Why this game is relevant? Assume we have a program of some computable function $f$ and want to construct probabilistically a total function $g$ not bounded by $f$ if $f$ is total; if $f$ is not total, any total function $g$ is OK for us. (It is convenient to consider a machine that constructs a total integer-valued function $g$ and then convert it into a bit sequence by putting $g(k)$ zeros after $k$th occurence of $1$ in the sequence.) We consider $f(0), f(1),\ldots$ as ``fireworks''; $f(i)$ is considered as a good one if the computation of $f(i)$ terminates. First we buy $f(0)$; with probability $1/100$ we ``take'' it and with probability $99/100$ we ``test'' it. \emph{Taking} $f(0)$ means that we run this computation until it terminates and then let $g(0):=f(0)+1$. If this happens, we may relax and let all the other values of $g$ be zeros. (If the computation does not terminate, i.e., if we have taken a bad firework, we are out of luck.) \emph{Testing} $f(0)$ means that we run this computation and at the same time let $g(0):=0$, $g(1):=0$, etc. until the computation terminates. If $f(0)$ is undefined, $g$ will be zero function, and this is OK since we do not care about non-total functions~$f$. But if $f(0)$ is defined, at some point testing stops, we have some initial fragment of zeros $g(0),g(1),\ldots,g(u)$, and then consider $f(u+1)$ as the next firework bought and test [take] it with probability $98/99$ [resp. $1/99$]. For example, if we decide to test it, we run the computation $f(u+1)$ until it terminates, and then let $g(u+1):=f(u+1)+1$. And so~on.

In this way we can beat a given computable function $f$ with probability arbitrarily close to~$1$. We need more: to construct with positive probability a function not bounded by \emph{any} total computable function. How can we do this? Consider all the functions as functions of two natural arguments, using a computable bijection between $\mathbb{N}$ and $\mathbb{N}^2$.  Use $i$th row  in the table of such a function to beat $i$th potential upper bound with probability $1-\varepsilon2^{-i}$. To beat the upper bound, it is enough to beat it in some row, so we can deal with all the rows in parallel, and get error probability at most $\varepsilon\sum_i 2^{-i}=\varepsilon$.

\section{Computable elements of closed sets}

Let us return now to the original question: can we use probabilistic arguments to construct a computable sequence with some property? As we have seen, if we are able to construct a probabilistic machine that generates some \emph{specific} sequence with positive probability, we can then conclude that this specific sequence is computable. However, we do not know arguments that follow this scheme, and it is difficult to imagine how one can describe a specific sequence that it is actually computable, and prove that it has positive probability --- without actually constructing an algorithm that computes it.

Here is another statement that may be easier to apply. In this statement we use the standard topology on the Cantor space of infinite $0$-$1$-sequences (=~the product topology on $\{0,1\}\times\{0,1\}\times\ldots$).

\begin{proposition}\label{closed-almost-everywhere}
Let $F$ be a closed set of infinite sequences. Let $M$ be a probabilistic machine \textup(without input\textup) whose output sequence belongs to $F$ with probability $1$. Then $F$ contains a computable element.
\end{proposition}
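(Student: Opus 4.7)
The plan is to build a computable element of $F$ bit by bit, using the lower-semicomputable function $q(x) = Q(\Sigma_x)$ as a guide, and then to invoke closedness of $F$ at the end.

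First I would observe that since the machine's output lies in $F$ with probability $1$, the distribution $Q$ is effectively supported on $F$: the probability that the output is finite is $0$ (finite sequences are not in $F$), and the probability that the infinite output escapes $F$ is $0$. Consequently, $q(x) = Q(\Sigma_x) = Q(\Sigma_x \cap F)$, so $q(x) > 0$ implies $\Sigma_x \cap F \neq \emptyset$. Moreover, the probability of stopping exactly at $x$ is zero, so the general inequality $q(x) \ge q(x0) + q(x1)$ becomes an equality. In particular, whenever $q(x) > 0$, at least one of $q(x0), q(x1)$ is positive.

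Next I would give the algorithm. Starting from $\omega|_0 = \Lambda$, where $q(\Lambda) = 1 > 0$, suppose we have built a prefix $x$ with $q(x) > 0$. Using the lower-semicomputable approximations $q_t(x0), q_t(x1)$ that converge to $q(x0), q(x1)$ from below, run them in parallel until one of them becomes strictly positive; pick the corresponding child as the next bit of $\omega$. This search terminates by the preceding paragraph, and the chosen child $xb$ satisfies $q(xb) \ge q_t(xb) > 0$, so the induction continues. This procedure is algorithmic, hence $\omega$ is computable.

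Finally I would verify $\omega \in F$. By construction $q(\omega|_n) > 0$ for every $n$, so $\Sigma_{\omega|_n} \cap F \neq \emptyset$ for every $n$. If $\omega$ were not in $F$, then by closedness of $F$ some basic neighborhood $\Sigma_{\omega|_n}$ would be disjoint from $F$, contradicting nonemptiness. Hence $\omega \in F$, as required. The key conceptual point — and the only mildly subtle one — is realizing that one does not need to effectively detect ``$\Sigma_x$ contains a branch of $F$'' (which is not decidable from the data we have); it suffices to track the weaker but semidecidable condition ``$q(x) > 0$'', and this condition propagates down the tree precisely because $Q$ assigns no mass to the null set $F^c$ or to finite outputs.
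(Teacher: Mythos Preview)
Your proof is correct and follows essentially the same route as the paper's: build a computable branch along which $q$ stays positive using lower semicomputability, then use closedness of $F$ to conclude that the limit sequence lies in $F$. You spell out one detail the paper leaves implicit, namely that $q(x)=q(x0)+q(x1)$ because the hypothesis forces the output to be infinite almost surely, which is exactly what guarantees the branch can always be extended.
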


\begin{proof}
Indeed, consider the output distribution $Q$ of the machine $M$ and take a computable branch in the binary tree along which the probabilities $Q(\Sigma_x)$ remain positive (this is possible since the function $Q(\Sigma_x)$ is lower semicomputable). We get some computable sequence~$\omega$. If $\omega\notin F$, then some prefix of $\omega$ has no extensions in $F$ (recall that $F$ is closed). This prefix has positive probability by construction, so our machine cannot generate elements in $F$ with probability~$1$. This contradiction shows that $\omega\in F$.
\end{proof}

In the following sections we give a specific example when this approach (in a significantly modified form) can be used.

\section{Lov\'asz local lemma}

Let $\mathcal{X}=(x_1,x_2,\ldots)$ be a sequence of mutually independent random variables; each variable $x_i$ has a finite range $X_i$. (In the simplest case $x_i$ are independent random bits.) Consider some family $\mathcal A=(A_1,A_2,\ldots)$ of events; each $A_i\in\mathcal{A}$ depends on a finite set of variables, denoted $\vbl(A_i)$.   Informally speaking, Lov\'asz local lemma (LLL) guarantees that these events do not cover the entire probability space if each of them has small probability and the dependence between the events (caused by common variables) is limited. Intuitively, the events are undesirable for us and we want to avoid all of them; LLL says that this is possible (with positive probability).

To make the statement exact, we need to introduce some terminology and notation. Two events $A,B\in\mathcal{A}$ are \emph{disjoint} if they do not share variables, i.e., if $\vbl(A)\cap\vbl(B)=\varnothing$; otherwise they are called \emph{neighbors}. For every $A_i\in\mathcal{A}$ let $N(A_i)$ be the neighborhood of $A_i$, i.e., the set of all events $A_j\in \mathcal{A}$ that have common variables with $A_i$. Each event is its own neighbor.

\begin{proposition}[Finite version of LLL, \cite{original-lovasz}]
     \label{finite}
Consider a finite family $\mathcal{A}=(A_1,A_2,\ldots,A_N)$ of events. Assume that for each event $A_i\in\mathcal{A}$ a rational number $z_i\in (0,1)$ is fixed such that
    $$
\Pr [A_i]\le z_i\cdot\!\!\!\!\!\!\!\!\!\!\prod_{j\ne i,\, A_j\in N(A_i)}(1-z_j)
    $$
for all $i$. Then
$$\Pr[\text{none of the events $A_i$ happens}] \ge \prod_{i} (1-z_i)\eqno (*)$$
\end{proposition}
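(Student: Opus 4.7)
The plan is to prove by induction the following stronger statement: for every index $i$ and every finite set $S$ of indices with $i \notin S$,
$$\Pr\Bigl[A_i \,\Big|\, \bigcap_{j\in S} \overline{A_j}\Bigr] \le z_i,$$
where the induction runs on the cardinality of $S$ (the base case $S=\varnothing$ is immediate since $\Pr[A_i]\le z_i\prod(1-z_j)\le z_i$). Once this auxiliary bound is established, the proposition follows at once by writing
$$\Pr\Bigl[\bigcap_{i=1}^{N}\overline{A_i}\Bigr] = \prod_{i=1}^{N}\Pr\Bigl[\overline{A_i}\,\Big|\,\overline{A_1}\cap\cdots\cap\overline{A_{i-1}}\Bigr] \ge \prod_{i=1}^{N}(1-z_i),$$
the conditional probabilities being well defined inductively (they are positive, as the lower bound is).

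For the inductive step, the key idea is to split $S$ into $S_1 = S\cap N(A_i)\setminus\{i\}$ (neighbors of $A_i$ lying in $S$) and $S_2 = S\setminus N(A_i)$ (non-neighbors). I would then write
$$\Pr\Bigl[A_i\,\Big|\,\bigcap_{j\in S}\overline{A_j}\Bigr] = \frac{\Pr\bigl[A_i\cap\bigcap_{j\in S_1}\overline{A_j}\,\big|\,\bigcap_{j\in S_2}\overline{A_j}\bigr]}{\Pr\bigl[\bigcap_{j\in S_1}\overline{A_j}\,\big|\,\bigcap_{j\in S_2}\overline{A_j}\bigr]}.$$
The numerator is bounded by $\Pr[A_i\mid\bigcap_{j\in S_2}\overline{A_j}]$, which equals $\Pr[A_i]$ because $A_i$ shares no variables with any $A_j$ for $j\in S_2$ and the underlying variables are independent; by hypothesis this is at most $z_i\prod_{j\in N(A_i)\setminus\{i\}}(1-z_j)$. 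For the denominator I would enumerate $S_1=\{j_1,\ldots,j_t\}$ and expand the intersection as a product of conditional probabilities $\Pr[\overline{A_{j_\ell}}\mid \overline{A_{j_1}}\cap\cdots\cap \overline{A_{j_{\ell-1}}}\cap\bigcap_{j\in S_2}\overline{A_j}]$; each such factor is $\ge 1-z_{j_\ell}$ by the inductive hypothesis (all conditioning sets have size $<|S|$), giving a lower bound of $\prod_{j\in S_1}(1-z_j)$. Dividing and using $S_1\subseteq N(A_i)\setminus\{i\}$ cancels all factors indexed by $S_1$ and leaves an upper bound of $z_i$ times a product of factors $(1-z_j)\le 1$, completing the step.

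The main thing to check carefully is that the conditioning events $\bigcap_{j\in S'}\overline{A_j}$ appearing in both numerator and denominator have positive probability, so that conditional probabilities are well defined; this is ensured by the same inductive bound, which guarantees each factor $1-z_{j_\ell}>0$. The only genuinely non-routine ingredient is the independence argument for the numerator, which relies on the precise definition of $\vbl$ and on the fact that events disjoint in variables from $A_i$ (here those indexed by $S_2$) remain independent of $A_i$ even after being intersected among themselves, since they live in a sub-$\sigma$-algebra generated by variables disjoint from $\vbl(A_i)$.
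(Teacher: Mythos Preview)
Your proposal is correct and is precisely the classical induction argument the paper alludes to when it says the bound ``was originally proved by a simple (though a bit mysterious) induction argument''; the paper itself does not spell out a proof of this proposition but simply cites~\cite{original-lovasz}. Your handling of the well-definedness of the conditional probabilities and of the independence step for the numerator is accurate.
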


Note that we skip the event $A_i$ in the right hand side of the condition.

This bound was originally proved~\cite{original-lovasz} by a simple (though a bit misterious) induction argument.

Note that the product in the right hand side of $(*)$ is positive (though it can be very small), and therefore there exists an assignment that avoids all the forbidden events. This existence result can be easily extended to infinite families:

\begin{proposition}[Infinite version of LLL]
     \label{infinite}
Consider a sequence $\mathcal{A}=(A_1,A_2,\ldots)$ of events. Assume that each event $A_i\in\mathcal{A}$ has only finitely many neighbors in~$\mathcal{A}$, and that for each event $A_i\in\mathcal{A}$ a rational number $z_i\in (0,1)$ is fixed such that
    $$
\Pr [A_i]\le z_i\cdot\!\!\!\!\!\!\!\!\!\!\prod_{j\ne i,\, A_j\in N(A_i)}(1-z_j)
    $$
for all $i$. Then there exists an assignment that avoids all the events $A_i\in\mathcal{A}$.
\end{proposition}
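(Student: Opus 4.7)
The plan is to deduce the infinite version from the finite version (Proposition~\ref{finite}) by a compactness argument on the product space $\Omega=\prod_i X_i$ of all assignments to the variables. I would proceed in two main steps: first show that every finite truncation of the problem admits a valid assignment, and then glue these assignments together using the fact that $\Omega$ is compact.

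For the first step, fix $N$ and consider the finite subfamily $\mathcal{A}_N=\{A_1,\dots,A_N\}$ with the same weights $z_1,\dots,z_N$. I need to check the finite-version hypothesis for $\mathcal{A}_N$. The neighborhood of $A_i$ inside $\mathcal{A}_N$ is contained in the full neighborhood $N(A_i)\subseteq\mathcal{A}$, and since every factor $(1-z_j)$ lies in $(0,1)$, dropping factors can only increase the product. Hence the infinite-version hypothesis implies the corresponding finite-version hypothesis for each truncation, and Proposition~\ref{finite} yields
$$\Pr[\text{none of } A_1,\dots,A_N \text{ happens}] \;\ge\; \prod_{i=1}^{N}(1-z_i) \;>\; 0.$$
In particular the set $B_N=\{\omega\in\Omega:\omega\notin A_i\text{ for every }i\le N\}$ is nonempty.

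For the second step, $\Omega$ is a product of finite sets $X_i$, hence compact in the product topology by Tychonoff's theorem. Each $A_i$ depends only on the finitely many coordinates in $\vbl(A_i)$, so $A_i$ is clopen and its complement is closed; therefore every $B_N$ is closed, and $B_1\supseteq B_2\supseteq\cdots$ is a decreasing sequence of nonempty closed subsets of a compact space. By the finite intersection property $\bigcap_{N}B_N\ne\varnothing$, and any element of this intersection is an assignment avoiding every event in $\mathcal{A}$.

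There is no real obstacle here: the argument is routine once one notices the monotonicity that lets the infinite-family hypothesis descend to every finite subfamily. The assumption that each $A_i$ has only finitely many neighbors is not actually used in the deduction itself; it enters only to guarantee that the product on the right-hand side of the hypothesis is a well-defined, positive quantity. The step most worth double-checking is the direction of the monotonicity (fewer neighbors make the product \emph{larger}, so the condition \emph{easier}), which is where one might go wrong if one is not careful.
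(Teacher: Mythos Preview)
Your argument is correct and follows the same route as the paper: apply the finite LLL to each finite subfamily and invoke compactness of the product space $\prod_i X_i$. The paper phrases the compactness step in the dual open-cover language (each $A_i$ is open, a finite subcover would contradict finite LLL), while you use the equivalent finite-intersection-property formulation with the closed sets $B_N$; you also spell out explicitly the monotonicity check that passing to a subfamily only relaxes the LLL hypothesis, which the paper leaves implicit.
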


\begin{proof}
This is just a combination of finite LLL and compactness argument (K\"onig's lemma). Indeed, each event from $\mathcal{A}$ is open the the product topology; if the claim
is false, these events cover the entire (compact) product space, so there exists
a finite subset of events that covers the entire space, which contradicts the
finite LLL.
\end{proof}

Our goal is to make this infinite version effective and get a \emph{computable version} of LLL.  First, we need to add some computability conditions. Let us assume that the range of $x_i$ is $\{0,1,\ldots,n_i-1\}$, where $n_i$ is a computable function of $i$. We assume that $x_i$ has a rational-valued probability distribution that is computable given $i$. We assume also that events  $A_i$ are effectively presented, i.e., for a given $i$ one can compute the list of all the variables in $\vbl(A_i)$ and the event itself (i.e., the list of tuples that belong to it).  Finally, we assume that for each variable $x_i$ only finitely many events involve this variable, and the list of those events can be computed given $i$.

\begin{theorem}[Computable infinite version of LLL]
\label{computablelll}
Suppose there is a rational constant $\alpha\in(0,1)$ and a
computable sequence $z_1,z_2,\ldots$ of rational numbers in $(0,1)$ such that
     $$
\Pr [A_i]\le \alpha z_i\cdot\!\!\!\!\!\!\!\!\prod_{j\ne i,\, A_j\in N(A_i)}(1-z_j)
    $$
for all $i$. Then there exists a computable assignment that
avoids all $A_i$.
\end{theorem}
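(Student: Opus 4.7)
The plan is to reduce the theorem to Proposition~\ref{closed-almost-everywhere}. Observe that the set $F\subseteq \prod_i X_i$ of assignments avoiding every $A_i$ is closed: each $A_i$ depends on only finitely many variables, hence is clopen, and $F$ is the intersection of their complements. After encoding each $x_i\in X_i$ as a block of $\lceil\log n_i\rceil$ bits, so that the Cantor-space framework of Section~3 applies, it suffices to construct a probabilistic machine whose output lies in $F$ with probability~$1$.

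The natural candidate is a Moser--Tardos resampling procedure adapted to the infinite setting. The machine samples each variable lazily from its distribution using fresh coin tosses; it then scans through the events $A_1, A_2, \ldots$ in order, and whenever it finds a currently violated $A_j$, it resamples all variables in $\vbl(A_j)$ with fresh coin tosses and restarts the scan. Under the hypothesis with slack factor $\alpha<1$, the standard witness-tree argument -- now a genuinely convergent geometric sum because of the slack -- bounds the expected number of resamples of any given event, and hence of any given variable (using that each variable lies in only finitely many events by assumption). Thus every $x_i$ stabilises to a limit $\xi_i$ almost surely, and the limiting assignment $\xi$ lies in $F$ almost surely, since any persistently violated event would be resampled infinitely often.

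The main obstacle is turning this asymptotic behaviour into a legitimate write-only sequential output: the machine must commit to the value of $x_1$ before it has ``finished'' processing every event that could in principle affect it. I would handle this by having the machine dovetail the Moser--Tardos process with a search for a finite \emph{stabilisation certificate} for each $x_i$ -- e.g.\ a stage by which no pending or future violated event at the current configuration can propagate a resample back to $x_i$ -- and emit the current value of $x_i$ only once such a certificate is found. The slack $\alpha<1$ is exactly what makes these certificates exist almost surely and keeps the relevant bounds lower semicomputable; in effect the construction produces a layerwise computable map from coin tosses to elements of $F$, following~\cite{rumyantsev-positive}. I expect the bulk of the technical work (and the ``significantly modified form'' alluded to after Proposition~\ref{closed-almost-everywhere}) to be exactly this bookkeeping. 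Once the machine is in place and its output lies in $F$ with probability one, the branch-chasing argument of Proposition~\ref{closed-almost-everywhere}, extended in the obvious way to product spaces, yields a computable element of $F$, which is the desired assignment.
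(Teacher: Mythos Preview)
Your high-level plan---run Moser--Tardos with the ``first violated event'' rule, show the resulting map is layerwise computable so its output distribution is computable, and then invoke Proposition~\ref{closed-almost-everywhere}---is exactly the paper's route. But there is a real gap in the middle.

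First, the ``stabilisation certificate'' picture is off. There is no finite-time certificate the machine can recognise: at every stage infinitely many events remain unchecked, any of which could be violated and cascade back to $x_1$. The paper does not attempt to build a write-only machine. It uses a \emph{rewriting} machine that simply runs Moser--Tardos and keeps overwriting cells; layerwise computability is the external statement that, given $i$ and $\delta$, one can compute a stage $N(i,\delta)$ with $\Pr[x_i\text{ changes after }N]<\delta$. Proposition~\ref{output-layerwise} then makes the output distribution computable, and Proposition~\ref{closed-almost-everywhere} finishes.

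Second, and this is the substantive gap, you have not supplied the argument that yields this effective convergence. Knowing that each $A_i$ (hence each $x_i$) has finite expected number of resamples---which is all the ``standard witness-tree argument'' gives---does \emph{not} bound $\Pr[A_i\text{ is resampled after step }t]$: in the infinite setting $\sum_j z_j/(1-z_j)$ may diverge, so the total number of steps has infinite expectation and the single resample of $A_i$ could occur arbitrarily late. The paper's key idea is a two-stage bound. For any $m$, pick $k$ so that every event within graph-distance $m$ of $A_i$ lies among $A_1,\dots,A_k$, and let $T_k$ be the first time all of $A_1,\dots,A_k$ are simultaneously false. Then (i) $\E[T_k]\le\sum_{j\le k} z_j/(1-z_j)$ by the finite analysis, so $\Pr[T_k>t]\to 0$ effectively; and (ii) if $A_i$ is resampled after $T_k$, its witness tree must contain a chain of neighbours from $A_i$ to some $A_s$ with $s>k$, hence has at least $m$ vertices, and the slack $\alpha$ inserts a factor $\alpha^{\size(T)}$ into the tree bound, giving probability at most $\alpha^m$. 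Thus $\Pr[A_i\text{ resampled after step }t]\le \Pr[T_k>t]+\alpha^m$, which is effectively small. This chain-length argument is precisely where $\alpha<1$ is used and is what your sketch is missing.
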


Note that the computability restrictions look quite naturally and that we only need
to make the upper bounds for probability just a bit stronger (adding some constant factor $\alpha<1$). It should not be a problem for typical applications of LLL; usually this stronger bound on $\Pr[A_i]$ can be easily established.

This theorem is the main result of the paper. Lance Fortnow formulated this statement and conjectured that it should follow somehow from the Moser--Tardos effective proof of finite LLL~\cite{moser-tardos}.  It turned out that it is indeed the case (the proof, found by the first author, was published as \texttt{arxiv} preprint~\cite{rumyantsev-positive}). The  proof goes as follows: according to Proposition~\ref{closed-almost-everywhere}, it is enough to construct a computable probability distribution on infinite sequences such that every $A_i$ has zero probability. To construct such a distribution, we need first to extend the class of probabilistic machines, and then construct a machine of this extended type, based on Moser--Tardos algorithm.

\section{Rewriting machines and layerwise computability}

Now we change the computational model and make the output tape of a probabilistic machine (without input) \emph{rewritable}: the machine can change several times the contents of a cell, and only the final value matters. (We may assume that $i$th cell may contain integers in $0\ldots n_i-1$ range, and that initially all cells contain zeros.) The final value is defined if a cell changes its value finitely many times during the computation. Of course, for some values of random bits it may happen that some cell gets infinitely many changes. In this case we say that the output sequence of the machine is undefined.

If the output sequence is defined with probability $1$, we get an almost everywhere defined mapping from Cantor space (sequences of bits produced by coin tossing) into the space of all assignments: the sequence of randomly generated bits is mapped to the output sequence. This mapping defines the output distribution on the assignments (the image of the uniform distribution on random bits). This distribution may be non-com\-put\-able (e.g., for every lower semicomputable real $\Omega\in(0,1)$ it is easy to generate $0000\ldots$ with probability $\Omega$ and $1111\ldots$ with probability $1-\Omega$). However, we get a computable output distribution if we impose some restriction on the machine.

Here it is: \emph{for every output cell $i$ and for every rational $\delta>0$ one can effectively compute integer $N=N(i,\delta)$ such that the probability of the event ``the contents of $i$-th cell changes after step $N$'', taken over all possible random bit sequences, does not exceed~$\delta$}.

\begin{proposition}\label{output-layerwise}
 In this case the limit content of every output cell is well defined with probability $1$, and the output distribution on the sequences of zeros and ones is computable.
\end{proposition}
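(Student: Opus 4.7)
\emph{Plan.} Two things need to be checked: (a)~with probability one every output cell stabilizes, and (b)~the induced output distribution $Q$ on the space of assignments is computable.

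For (a), fix a cell $i$. For every integer $k\ge 1$ the hypothesis supplies an effectively computable step $N(i,1/k)$ after which cell~$i$ changes with probability at most $1/k$. Hence the event ``cell~$i$ changes infinitely often'' has probability at most $1/k$ for every $k$, so this probability is zero. A countable union over $i$ shows that almost surely every cell changes only finitely many times and therefore has a well-defined limit value.

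For (b), it is enough to compute, uniformly in a tuple $(a_1,\ldots,a_k)$ and a rational $\varepsilon>0$, a rational approximation to the cylinder probability $Q(\Sigma_{a_1\ldots a_k})$ within $\varepsilon$. Given such data, set $N:=\max_{1\le i\le k} N(i,\varepsilon/(2k))$. By the union bound, the probability that at least one of the cells $1,\ldots,k$ is changed after step $N$ is at most $\varepsilon/2$. Consequently, the joint distribution of the contents of cells $1,\ldots,k$ at step $N$ and their joint distribution of limit values differ in total variation by at most $\varepsilon/2$. The distribution at step $N$ is computed exactly: the machine reads at most $N$ random bits during $N$ computational steps, so one enumerates all at most $2^N$ possible random-bit prefixes, simulates $N$ deterministic steps of the machine on each, and counts those for which the first $k$ cells hold $a_1,\ldots,a_k$; dividing this count by $2^N$ yields a rational number within $\varepsilon/2<\varepsilon$ of $Q(\Sigma_{a_1\ldots a_k})$.

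I do not foresee a substantive obstacle: once the assumption on $N(i,\delta)$ is invoked via the Borel--Cantelli-style argument of (a) and the union bound of (b), what remains is a finite deterministic simulation. The only point calling for mild care is the bookkeeping that in $N$ computational steps at most $N$ random bits are consumed, ensuring that the step-$N$ distribution is an exact rational rather than a limit that must itself be approximated.
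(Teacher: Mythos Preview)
Your proof is correct and follows essentially the same route as the paper's: compute $N(i,\delta/k)$ for each of the first $k$ cells, take the maximum, and simulate that many steps over all random-bit prefixes. You are a bit more explicit than the paper---you separately argue part~(a) via a Borel--Cantelli-style bound and record that at most $N$ random bits are consumed in $N$ steps---but the idea is identical.
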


\begin{proof}
Indeed, to approximate the probability of the event ``output starts with $u$'' for a sequence $u$ of length $k$ with error at most $\delta$, we find $N(i,\delta/k)$ for $k$ first cells (i.e., for $i=0,1,\ldots, k-1$). Then we take $n$ greater than all these values of $N$, and simulate first $n$ steps of the machine for all possible combinations of random bits.
\end{proof}

An almost everywhere defined mappings of Cantor space defined by machines with described properties, are called \emph{layerwise computable mappings}. Initially they appeared in the context of algorithmic randomness~\cite{hoyrup-rojas}. One can show that such a mapping is defined on all Martin-L\"of random sequences. Moreover, it can be computed by a machine with a write-only output tape if the machine additionally gets the value of randomness deficiency for the input sequence. This property can be used (and was originally used) as an equivalent definition of layerwise computable mapping. (The word ``layerwise'' reflects this idea: the mapping is computable on all ``randomness levels''.)

This aspect, however, is not important for us now; all we need is to construct a layerwise computable mapping whose output distribution avoids all the undesirable events $A_i$ with probability~$1$, and then apply Proposition~\ref{closed-almost-everywhere}.

\section{Moser--Tardos algorithm and its properties}

\subsection{The algorithm}

First, we recall the Moser--Tardos probabilistic algorithm that finds a satisfying assignment for the finite case. To make this paper self-contained, we reproduce the arguments from~\cite{moser-tardos} (we need to use them in a slightly modified form).

\medskip\noindent
Moser--Tardos algorithm is very simple and natural:

\begin{itemize}
\item start by assigning random values to all $x_1,x_2,\ldots$ independently (according to their distributions);
\item while some of the events $A_i$ are true, select one of these events and resample all the variables in $\vbl(A_i)$ using fresh independent random values.
\end{itemize}

\noindent
Several remarks:

\begin{itemize}

\item The resampling caused by some $A_i$ does not necessarily makes this event false; it may remain true and can be resampled again later.

\item Resampling for $A_i$ can affect its neighbors; in particular, some of them could be false before and become true after the resampling (then further resampling is needed).

\item The algorithm may terminate (when the current assignment makes all $A_i$ false) or continue indefinitely. The latter can happen even for finite case (though with zero probability, under the conditions of LLL, as we will see). In the infinite case the algorithm does not terminate: even if all $A_i$ are false at some moment, we need an infinite amount of time to check them all.

\item To describe the algorithm fully, one needs to specify some (deterministic or probabilistic) rule that says which of the (true) $A_i$ should be resampled. We assume that some rule is fixed. In the finite case the choice of the rule does not affect the analysis at all; for the infinite case we assume that the \emph{first true $A_i$ is selected} (the one with minimal $i$).

\end{itemize}

\subsection{The analysis: outline}
Let us now reproduce Moser--Tardos analysis for the finite case. They show that the expected number of steps (resampling operations) does not exceed
   $$\sum_i \frac{z_i}{1-z_i}.$$
This implies that the algorithm terminates with probability $1$. How do they get this bound?

\begin{itemize}

\item For each resampling performed during the algorithm, some tree is constructed. Its vertices are labeled by events $A_i$; informally, the tree reflects the dependence between events that were resampled on this and previous steps.

\item Running Moser--Tardos algorithm step by step, we get a sequence of trees (one per each resampling); all the trees in this sequence will be different. (This sequence is a random variable depending on the random choices made when choosing the random values and the events for resampling.)

\item The number of resampling steps is the number of trees appearing in this sequence, therefore the expected number of steps is
$$\sum_{T\in\Tr} \Pr\,[\text{$T$ appears}]$$
where $\Tr$ includes all trees that may appear. (It is important here that a tree cannot appear twice in the same run of the algorithm, see above.)

\item This sum is now estimated by splitting it accordingly to root labels of the trees. Let $\Tr_i$ be a set of trees in $\Tr$ that have root label $A_i$ (such a tree appears when $A_i$ is resampled). Moser and Tardos show that
$$\sum_{T\in\Tr_i} \Pr\,[\text{$T$ appears}]\le \frac{z_i}{1-z_i};$$
this is done by comparing the probabilities in question to some other probabilities (to appear in a process of Galton--Watson type, see below). Then we sum these inequalities for all~$i$.
\end{itemize}

\subsection{The set $\Tr$ and the construction of the trees}

Let us now go into the details.

\begin{itemize}
\item The set $\Tr$ contains finite rooted trees whose vertices are labeled by events $A_k$, with the following restrictions: the sons of a vertex with label $A_k$ should be labeled by different neighbors of $A_k$ (recall that $A_k$ is a neighbor of itself), and these neighbors should be disjoint events.

\item Now we explain how trees are constructed. Assume that the events $A_{i_1},\ldots, A_{i_k},\ldots$ are resampled (in this order) by the algorithm, and we now are at step $k$ and construct a tree $T\in \Tr$ that corresponds to $A_{i_k}$ (=appears at step $k$ of the algorithm). We consider the resampling operations in the reverse order, starting from $A_{i_k}$. Initially the tree consists of a root vertex only; the root is labeled by $A_{i_k}$ (or just by $i_k$; we may identify events with their indices). Then we consider events $A_{i_{k-1}},\ldots, A_{i_1}$. If the next event $A_s$ in this sequence is not a neighbor of any current tree vertex (its label), we skip it. If it is, we look at all these vertices (labeled by neighbors), take one that is most distant from the root (breaking the ties arbitrarily), and attach to it a new son labeled by $A_s$ (by $s$, if we use indices as labels).
\bigskip

\begin{figure}[h]
\begin{center}
\includegraphics{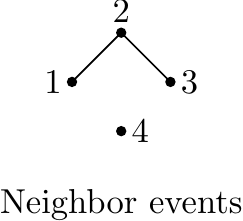}\hspace*{20mm}
\includegraphics{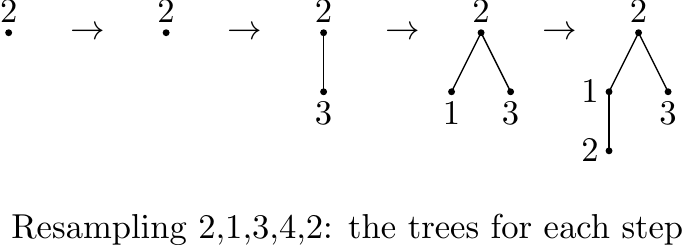}
\end{center}
\end{figure}
An example: consider four events $1,2,3,4$ with neighbor pairs $(1,2)$ and $(2,3)$, and assume that the sequence of resamplings is $2,1,3,4,2$. Then we start with root labeled $2$, then skip $4$ (since it is not a neighbor of $2$), then add $3$ as a son of $2$, then add $1$ as another son of $2$ (note that $1$ and $3$ are not neighbors, so $1$ cannot be attached to $3$), and finally add first $2$ as a son of $1$ or $3$ (the first possibility is shown; note that $2$ is also a neighbor of the root, but we select a vertex on a maximal distance from the root).

\item The resulting tree belongs to $\Tr$. Indeed,

\begin{itemize}
\item each son is a neighbor of its father by construction;
\item different sons of the same father are not neighbors, otherwise the younger brother will become a son of the older one instead of becoming his brother.
\end{itemize}

\item The last argument shows in fact more: any two vertices of the same height (at the same distance from the root) are not neighbors. (This property will be used later.)

\item Two trees that appears at different steps of the algorithm, are different. It is obvious if they have different root labels. If they both have some $A_i$ as the root label, then each tree includes all previous resamplings of $A_i$, so the number of $i$-labels is different for both trees.

\end{itemize}

\subsection{The probability estimate}

Our next goal: for a given $T\in \Tr$ we upperbound the probability of the event ``$T$ appears at some stage of Moser--Tardos algorithm''. This bound is provided by the following lemma.

\begin{lemma}\label{treeprob}
For every tree, the probability of the event ``T appears at some stage of Moser--Tardos algorithm'' does not exceed of the product of $\Pr[A_i]$ for all labels in $T$.
\end{lemma}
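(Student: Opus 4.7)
The plan is to use the standard Moser--Tardos ``resampling table'' trick: replace the on-line random sampling by a fixed infinite table of independent random values, one stack per variable. For each variable $x_j$, prepare in advance an infinite sequence $r_{j,1}, r_{j,2}, \ldots$ of independent samples (drawn from the distribution of $x_j$); whenever the algorithm samples or resamples $x_j$, it pops the next entry off this stack. After this reformulation, the algorithm becomes a deterministic function of the (still independent) table entries, and the ``tree $T$ appears at some stage'' event becomes a measurable subset of the product space of tables.

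Now, for a fixed tree $T\in\Tr$, I would associate to every vertex $v$ of $T$ and every variable $x_j\in\vbl(A_{i(v)})$ a specific stack index $t(v,j)\in\{1,2,\ldots\}$: namely the index of the entry of $x_j$'s stack that was in use at the moment when, in the construction of $T$, the resampling corresponding to $v$ was visited. Concretely, $t(v,j)$ equals one plus the number of resamplings of events containing $x_j$ that occur strictly later than the step associated with $v$. The crucial structural claim is then:

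\medskip

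\emph{Claim.} If two vertices $v\neq w$ of $T$ share a variable $x_j$ (i.e.\ $x_j\in\vbl(A_{i(v)})\cap\vbl(A_{i(w)})$), then $t(v,j)\neq t(w,j)$.

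\medskip

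This is where the tree construction rule pays off, and this is the main obstacle of the proof. The events $A_{i(v)}$ and $A_{i(w)}$ share a variable, so they are neighbors. By the construction of $T$ (attach each new label to a deepest eligible ancestor), two vertices at the same depth cannot carry neighboring labels; hence $v$ and $w$ lie at different depths, say $v$ strictly shallower than $w$. The resampling step associated with $v$ is then strictly later than that of $w$, and between these two steps there must occur a resampling involving $x_j$: either the step of $v$ itself, or that of the vertex on the path from $w$ to $v$ lying immediately below $v$ (which is a neighbor of $v$ by tree-edge construction, and whose label lies on the same depth-chain used to define $t(v,j)$). This intervening resampling pops one or more entries of $x_j$'s stack between the moments that define $t(w,j)$ and $t(v,j)$, so the two indices differ.

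Given the claim, the event ``$T$ appears'' is contained in the event that, for every vertex $v$ of $T$, the tuple of stack entries $\bigl(r_{j,t(v,j)}\bigr)_{x_j\in\vbl(A_{i(v)})}$ satisfies $A_{i(v)}$. The claim says that, as $v$ ranges over vertices of $T$, these tuples involve pairwise disjoint table cells, and hence the events are mutually independent. Since each such event has probability exactly $\Pr[A_{i(v)}]$, the desired bound
\[
\Pr[\text{$T$ appears}]\le \prod_{v\in T}\Pr[A_{i(v)}]
\]
follows immediately by independence. The only genuinely delicate step is the disjointness claim; the rest is bookkeeping about the resampling table.
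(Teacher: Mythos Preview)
Your overall strategy---the resampling table and the reduction to disjoint table cells---is exactly the paper's. But there is a genuine gap in the execution. For the final independence step to be valid, the indices $t(v,j)$ must be \emph{determined by the tree $T$ alone}, not by the particular random run that happened to produce $T$. Your definition (``one plus the number of resamplings of events containing $x_j$ that occur strictly later than the step associated with $v$'') is a function of the run: different runs producing the same $T$ could, a priori, have different numbers of resamplings touching $x_j$ between the steps corresponding to tree vertices. If the indices are random, then ``$T$ appears'' is not contained in a single product event over fixed table cells, and the independence conclusion does not follow from your Claim about distinctness within one run. (There is also a mismatch: the value ``in use'' just before the resampling at $v$ is governed by \emph{earlier} resamplings of $x_j$, not later ones; and in your Claim's proof the ``path from $w$ to $v$'' need not exist, since $v$ being shallower than $w$ does not make it an ancestor.)

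What is missing is precisely the paper's key observation: once some vertex of $T$ involves $x_j$, \emph{every} resampling step (up to the root step) that involves $x_j$ must itself appear in $T$, because at the moment it is processed during the backward tree construction it is a neighbor of a vertex already in the tree and hence cannot be skipped. From this it follows that the stack index used at $v$ equals one plus the number of tree vertices at strictly greater depth whose label contains $x_j$---a quantity depending only on $T$. That is the fact that makes the table cells deterministic and pairwise disjoint, and it is the main content of the lemma; your Claim (mere distinctness) is a consequence of it but not a substitute for it.
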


\noindent
(If the label $A_i$ appears several times as a label in $T$, its probability appears several times in the product.)

To prove this lemma, it is convenient to specify more explicitly how the random values (used as original values and for resampling) are chosen. Long ago people used printed tables of random numbers. These tables were prepared in advance by a random process; when a fresh random number was needed, it was taken from the table and then crossed out (to prevent its future use). In the same way we assume that for each variable $x_i$ a sequence of independent random values $x_i^0, x_i^1, x_i^2,\ldots$ is prepared in advance. When a new random value is needed, we use the first unused value from this sequence. (So $x_i^0$ is used for the initialization, and $x_i^1,x_i^2,\ldots$ are then used for resampling of events that involve~$x_i$.)

Now the crucial observation: \emph{the tree $T$ that appears at some step, determines uniquely which random values were used during the resampling operations
corresponding to $T$'s vertices}. Indeed, a given variable $x_i$ appears at most once at every height level of the tree (as we have mentioned above), and the height ordering agrees with the temporal ordering of the corresponding resampling operations. No other resampling step (in between the resampling operations included in the tree) could involve $x_i$ due to tree construction (such a resampling operation cannot be skipped).

Therefore, the event ``$T$ appears'' is included in the intersection of the independent events of probability $\Pr[A_i]$ for all labels $A_i$ in $T$. The lemma is proven.

\subsection{Comparison with the Galton--Watson process}

Now we use the assumption of the finite LLL, the upper bound for $\Pr[A_i]$. Using this bound, we need to show that
$$
   \sum_{T\in \Tr_i } \Pr[\text{$T$ appears during the algorithm}]\le \frac{z_i}{1-z_i}.
$$
for every $i$ (We have already noted that this gives the required bound for the expected number of steps.) This is done by the following nice (and mysterious) trick.

Consider the following probabilistic process  (of Galton--Watson type) that produces some tree in $\Tr_i$. The process starts with a tree that has only one vertex, the root, labeled by $i$. Then for each neighbor $A_j$ of $A_i$ (including $A_i$ itself) we decide whether the root will have a son labeled by $j$. (Only one son with a given label can appear.) Namely, son with label $j$ appears with probability $z_j$. Then the same is done for each new vertex $y$: each its neighbor $A_k$ is attached as a son of $y$ with probability $z_k$, and so on. This process may never terminate (then we get an infinite tree), but may also terminate and produce a finite tree.

For example, if event $A_i$ has neighbors $A_i, A_j, A_k$ (and no others), the probability to obtain a tree with only one vertex (the root) in this process is
     $$ (1-z_i)(1-z_j)(1-z_k)$$
(this tree appears when none of three possible sons are attached). And the probability to get a tree with root $i$ and its son $j$ (and no other vertices) is
     $$(1-z_i)z_j(1-z_k)\cdot(1-z_i)(1-z_j),$$
assuming that $A_j$ has no other neighbors except for $A_i$ and $A_j$ itself.

Now the Galton--Watson (GW-) process (for a given $i$) is described, and we claim that for every tree $T\in \Tr_i$ (with root label $i$) we have
    $$
\Pr[\text{$T$ appears during the algorithm}]\le\frac{z_i}{1-z_i}\Pr[\text{$T$ appears in the GW-process}].
    $$
Since only one tree can appear in GW-process, the sum of probabilities in the right hand side (over all $T\in\Tr_i$) does not exceed~$1$, and we get the promised bound.

So it remains to prove our claim. According to the Lemma~\ref{treeprob}, the upper bound for the left-hand side can be obtained by multiplying the bounds for $\Pr[A_i]$ for all labels $A_i$ that appear in the tree.  In our example, if the tree consists of root $i$ alone, we get (using the assumption of LLL)
   \begin{multline*}
   \Pr[\text{$T$ appears during the algorithm}]\le z_i(1-z_j)(1-z_k)=\\
   = \frac{z_i}{1-z_i}[(1-z_i)(1-z_j)(1-z_k)] =
   \frac{z_i}{1-z_i}\Pr[\text{$T$ appears in the GW-process}]
   \end{multline*}
Attaching the son $j$ to the root $i$, we multiply the left hand side by
   $$
\Pr[A_j]\le z_j (1-z_i)=\frac{z_j}{1-z_j}[(1-z_i)(1-z_j)];
   $$
in the right hand side we replace $(1-z_j)$ (probability of not having the son $z_j)$ by $z_j$ (probability of having it) and add new factors $(1-z_i)(1-z_j)$ (since $j$ has no sons in the tree), which gives exactly the same factor. The same happens when we add more vertices.

This finishes the Moser--Tardos analysis. Let us put all the steps together again: knowing that
$$
   \sum_{T\in \Tr_i } \Pr[\text{$T$ appears during the algorithm}]\le \frac{z_i}{1-z_i},
$$
we now sum these inequalities for all $i$ and get
$$
\E[\text{number of steps}]=\sum_{T\in \Tr} \Pr[\text{$T$ appears during the algorithm}]\le \sum_i\frac{z_i}{1-z_i},
$$
as required.
\smallskip

\subsection{Infinite case}

Now we need to analyze the infinite case. We have already described the algorithm; now we need to show that it indeed provides a layerwise computable mapping with required properties. This means that

\begin{itemize}

\item
almost surely every variable $x_i$ is changed only finitely many times and, moreover, the probability of the event ``$x_i$ changes after first $t$ steps of the algorithm'' effectively converges to zero as $t\to\infty$;

\item
the final values of the variables avoid all $A_i$ (make them false).

\end{itemize}

To prove both statements, it is enough to show that
$$
\Pr[\text{$A_i$ is resampled after first $t$ steps}]\to 0
$$
as $t\to \infty$, and the convergence is effective (for every rational $\delta>0$ one can compute some $T$ such that this probability is less than $\delta$ when $t>T$). Indeed, every variable is involved in finitely many events (and we can compute the list of these events); note also that if some $A_i$ is false for the final values of the variables, first of these $A_i$ will be resampled infinitely many times (and this happens with zero probability).

We will show this effective convergence in two steps. First, consider some $k$ and the first $k$ events $A_1,\ldots, A_k$. While some of these events are true, the algorithm will never resample events $A_j$ with $j>k$, so it behaves exactly as the finite Moser--Tardos algorithm applied to $A_1,\ldots,A_k$. So we can use the result proved earlier: if $T_k$ is the first moment when all $A_1,\ldots,A_k$ are false,
	$$
\E [T_k] \le \sum_{i=1}^k \frac{z_i}{1-z_i}
	$$
(we do not need to use the additional factor $\alpha$ yet). This means that $T_k$ is finite almost surely, and the probability of $T_k > t$ converges to $0$ effectively as $t\to\infty$ (due to Tschebyshev inequality).

However, it is not enough for our purposes. At the moment $T_k$ all the events $A_1,\ldots,A_k$ are false, but some of the events $A_j$ with $j>k$ may be true. Resampling such $A_j$, we may change the variables that appear in $A_1,\ldots,A_k$, and this change may make some of these events true again, thus triggering further resampling operations. So we need a more detailed analysis.

Let us look how some $A_i$ among $A_1,\ldots,A_k$ may become true again at some moment $t>T_k$. This can happen only if some neighbor of $A_i$, say, $A_p$, was resampled between $T_k$ and $t$. At that moment $A_p$ was true, so the are two possibilities:
\begin{itemize}
\item $p>k$ (then $A_p$ can be true at the moment $T_k$);
\item $p\le k$, in this case $A_p$ was false at the moment $T_k$ and became true later.
\end{itemize}
In the second case ($A_p$ was false and then became true) some neighbor of $A_p$, say, $A_q$, was resampled before $A_p$ became true. Again, we have two cases $q>k$ and $q\le k$; in the latter case $A_q$ became true because some neighbor $A_r$ of $A_q$ was resampled, etc. We come to the following crucial observation: \emph{if $A_i$ is resampled after the moment $T_k$, there is a chain of neighbors that starts with $A_i$ and ends with some $A_s$ with $s>k$, and all elements of this chain appear in the tree for the $A_i$-resampling}. For fixed $i$ and very large $k$ the length of this chain (and therefore the size of the tree) is big, since all events that are close to $A_i$ are included in $A_1,\ldots,A_k$. And for big trees the additional factor $\alpha$ decreases significantly the probability of their appearance.

Indeed, the inequality
$$
\Pr[\text{$T$ appears during the algorithm}]\le\frac{z_i}{1-z_i}\Pr[\text{$T$ appears in the GW-process}]
    $$
 (for every $T\in\Tr$) now gets an additional factor $\alpha^{\size(T)}$ in the right hand side (since we add factor $\alpha$ for every vertex). We can now proceed as follows. Fix some $A_i$. For an arbitrary natural $m$ take $k$ large enough, so that $A_1,\ldots,A_k$ contain all events that are at distance $m$ or less from $A_i$ in the neigborhood graph. Then the event ``$A_i$ is resampled after time $t$'' is covered by two events:

\begin{itemize}
\item $t<T_k$;
\item $A_i$ was resampled after $T_k$.
\end{itemize}

The probability of the first event ($T_k>t$) effectively converges to $0$ as $t$ increases, so it remains to bound the probability of the second event. As we have seen, this event happens only if a large tree ($m$ or more vertices) with root $i$ appears during the algorithm, so this probability is bounded by
 \begin{multline*}
 \sum_{T\in\Tr_i,\ \size(T)\ge m} \Pr[\text{$T$ appears during the algorithm}]\le\\
\le \sum_{T\in\Tr_i} \alpha^m \Pr[\text{$T$ appears during the GW-process}]\le \alpha^m,
\end{multline*}
so we get what we wanted.

\medskip
This finishes the proof of Theorem~\ref{computablelll}.

\section{Corollaries}

We conclude the paper by showing some special cases where computable LLL can be applied.

A standard illustration for LLL is the following result:
\emph{a CNF where each clause contains $m$ different variables and
has at most $2^{m-2}$ neighbors, is always satisfiable}.
Here neighbors are clauses that have common variables.

Indeed, we let $z_i=2^{-(m-2)}$ for all the events and note that
  $$
2^{-m} \le 2^{-(m-2)} [(1-2^{-(m-2)})^{2^{m-2}}],
  $$
since the expression in square brackets is approximately $1/e>1/2^2$.

This was about finite CNFs; now we may consider \emph{effectively presented} infinite CNFs. This means that we consider CNFs with countably many variables and clauses (numbered by natural numbers); we assume that for given $i$ we
can compute the list of clauses where $i$th variable appears, and for given
$j$ we can compute $j$th clause.

\begin{corollary}
    For every effectively presented infinite CNF where each clause contains $m$ different variables and has at most $2^{m-2}$ neighbors, one can find a computable
    assignment that satisfies it.
\end{corollary}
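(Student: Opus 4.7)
The plan is to read off the corollary as a direct instance of Theorem~\ref{computablelll}, so most of the work is checking that the hypotheses of that theorem hold, plus squeezing out the extra factor $\alpha$.

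First I would set up the probability space and events. Let the independent variables $x_i$ be fair coin flips (range $\{0,1\}$, uniform, trivially computable). For each clause $C_j$ let $A_j$ be the event ``$C_j$ is not satisfied''. Since $C_j$ has $m$ distinct literals, exactly one assignment to its variables falsifies it, so $\Pr[A_j]=2^{-m}$ and $\vbl(A_j)$ is the set of variables of $C_j$. The effective presentation of the CNF lets us compute, given $j$, both $\vbl(A_j)$ and the single forbidden tuple, so the family $\{A_j\}$ is effectively presented in the sense required by the theorem.

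Next I would verify the ``each variable appears in only finitely many, effectively computable events'' condition. This follows from the neighbor bound: if variable $x$ occurs in a clause $C$, every other clause containing $x$ is by definition a neighbor of $C$, so the assumption that $C$ has at most $2^{m-2}$ neighbors forces $x$ to occur in at most $2^{m-2}+1$ clauses. The effectively-presented-CNF hypothesis gives us the algorithm that, on input $i$, lists the (now provably finite) collection of clauses containing $x_i$.

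Now the LLL inequality with slack. I would take the constant sequence $z_j = 2^{-(m-2)}$, as in the finite-CNF illustration. Since $|N(A_j)|\le 2^{m-2}+1$, the product in Theorem~\ref{computablelll} has at most $2^{m-2}$ factors of $(1-2^{-(m-2)})$, so it suffices to find a rational $\alpha\in(0,1)$ with
$$2^{-m}\ \le\ \alpha\cdot 2^{-(m-2)}\cdot\bigl(1-2^{-(m-2)}\bigr)^{2^{m-2}}.$$
The plain (non-$\alpha$) version in the excerpt works because $\bigl(1-2^{-(m-2)}\bigr)^{2^{m-2}}$ is at least about $1/e$, which is strictly larger than $1/4$; that gap is exactly what lets us insert a rational $\alpha<1$. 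Concretely, $\alpha = e/4 + \eta$ for small rational $\eta>0$ works uniformly for all sufficiently large $m$, and the remaining small values of $m$ can be handled by direct computation (for the borderline $m=3$ case one may simply strengthen $z_j$ or absorb into the standing ``$m$ large enough'' reading of the hypothesis).

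With all hypotheses verified, Theorem~\ref{computablelll} yields a computable assignment that avoids every $A_j$, i.e.\ satisfies every clause. The only genuinely non-routine step is the finiteness-of-occurrences argument; the rest is translation and the trivial numerical check that the slack $1/e-1/4>0$ lets us pick a rational $\alpha<1$.
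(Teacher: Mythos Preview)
Your proposal is correct and follows exactly the paper's approach: apply Theorem~\ref{computablelll} with the same choice $z_j=2^{-(m-2)}$ and an $\alpha$ close to~$1$, observing that the gap between $(1-2^{-(m-2)})^{2^{m-2}}\approx 1/e$ and $1/4$ provides the slack. You are in fact more careful than the paper, which simply asserts ``$\alpha=0.99$'' without discussing the tight case $m=3$ (where one must indeed shift $z_j$, e.g.\ to $1/3$) or explicitly verifying the finitely-many-occurrences hypothesis of the theorem.
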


\begin{proof}
Indeed, the same choice of $z_i$ works, if we choose $\alpha$ close to $1$ (say, $\alpha=0.99$).
\end{proof}

Similar argument can be applied in the case where there are clauses
of different sizes. The condition now is as follows: for every variable there
are at most $2^{\gamma m}$ clauses of size $m$ that involve this variable,
where $\gamma\in(0,1)$ is some constant. Note that now we do not assume
that every variable appears in finitely many clauses, so the notion of
effectively presented infinite CNF should be extended: we assume that for each
$i$ and for each $m$ one can compute the list of clauses of size $m$ that
include $x_i$.

\begin{corollary}\label{variable-cnf}
    For every $\gamma\in(0,1)$ there exists some $M$ such that every effectively presented infinite CNF where each variable appears in at most $2^{\gamma m}$ clauses of size $m$ \textup(for every $m$\textup) and all clauses have size at least $M$, has a computable satisfying assignment.
\end{corollary}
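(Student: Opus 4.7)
The plan is to apply Theorem~\ref{computablelll}. For each clause $c$ let $A_c$ be the event ``$c$ is unsatisfied''; since the variables are Boolean and uniformly distributed, $\Pr[A_c]=2^{-|c|}$, where $|c|$ is the size of $c$. Pick a constant $\beta\in(\gamma,1)$ (for instance $\beta=(1+\gamma)/2$) and set $z_c=2^{-\beta|c|}$; these weights are clearly computable from $c$.

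For the LLL bound, observe that a clause $c$ has at most $|c|\cdot 2^{\gamma m}$ neighbors of size $m$, so $\sum_{c'\in N(c)\setminus\{c\}}z_{c'}\le|c|\cdot S_M$ where $S_M=\sum_{m\ge M}2^{-(\beta-\gamma)m}\to 0$ as $M\to\infty$ (since $\beta>\gamma$). Using $\ln(1-z)\ge-2z$ for small $z$, the product $\prod_{c'\ne c,\,c'\in N(c)}(1-z_{c'})$ is then at least $2^{-\eta|c|}$ with $\eta$ as small as desired once $M$ is large enough. Fixing $\alpha=0.99$ and choosing $\eta$ so small that $\beta+\eta<1$, the lower bound $M\ge\log_2(1/\alpha)/(1-\beta-\eta)$ ensures $2^{-|c|}\le\alpha z_c\prod_{c'\ne c,\,c'\in N(c)}(1-z_{c'})$ for every clause (all of which have size at least $M$).

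The main obstacle is that Theorem~\ref{computablelll} was stated with the hypothesis that each variable appears in only finitely many events, whereas here a single variable $x_i$ may belong to $2^{\gamma m}$ clauses of every size $m\ge M$. My plan is to revisit the proof: what it actually needs is the effective convergence $\Pr[x_i\text{ changes after step }N]\to 0$ as $N\to\infty$. For a fixed $\delta>0$ I would split the clauses containing $x_i$ by size. The tail with $|c|\ge M_1$ contributes at most
$$
\sum_{m\ge M_1}2^{\gamma m}\cdot\frac{\alpha\cdot 2^{-\beta m}}{1-2^{-\beta m}}=O\!\Bigl(\sum_{m\ge M_1}2^{-(\beta-\gamma)m}\Bigr)
$$
to the expected total number of resamplings of clauses containing $x_i$ (via the Moser--Tardos bound $\sum_{T\in\Tr_c}\Pr[T\text{ appears}]\le\alpha z_c/(1-z_c)$ derived inside the proof); by Markov, $M_1$ can be chosen effectively from $\delta$ so that the probability that \emph{any} such tail clause is ever resampled is $\le\delta/2$. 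The remaining clauses with $M\le|c|<M_1$ form a finite, effectively enumerable list (using the corollary's computability hypothesis, which provides, for each size $m$, the list of size-$m$ clauses containing $x_i$), and for each of them the tail bound inside Theorem~\ref{computablelll}'s proof gives an effective choice of $N$; taking $N$ large enough makes their combined contribution $\le\delta/2$. This upgrades the argument to produce a layerwise computable Moser--Tardos output in our setting, and Proposition~\ref{closed-almost-everywhere} then supplies the computable satisfying assignment.
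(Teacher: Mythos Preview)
Your choice of weights $z_c=2^{-\beta|c|}$ and the verification of the LLL inequality match the paper. The gap is in step~2, where you invoke ``the tail bound inside Theorem~\ref{computablelll}'s proof'' for each of the finitely many small clauses $c$ containing $x_i$. That tail bound works by choosing $k$ so large that every event within neighbor-distance $m$ of $c$ has index $\le k$; then the chain argument forces any resampling of $c$ after time $T_k$ to produce a tree of size $\ge m$, contributing at most $\alpha^m$. This \emph{requires} the neighborhood graph to be locally finite. In your setting every clause has infinitely many neighbors, so for every $k$ the clause $c$ already has a neighbor of index $>k$, the chain from $c$ to $\{A_s:s>k\}$ can have length $2$, and the $\alpha^m$ bound never engages. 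Step~1 does not rescue this: it controls only large clauses \emph{containing $x_i$}, whereas the chain from $c$ passes through neighbors of $c$, then their neighbors, and so on, and those large clauses need not contain $x_i$ at all. To salvage the direct approach you would have to bound, for each small $c$, the probability that some large clause appears anywhere in a Moser--Tardos tree rooted at $c$, and you have not supplied such an argument.

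The paper sidesteps the issue by a reduction rather than by reopening the proof of Theorem~\ref{computablelll}: from every clause delete the $\rho$-fraction of its variables having the smallest indices, for some small $\rho>0$. The resulting CNF is only harder to satisfy, still meets the hypothesis with a slightly larger exponent $\gamma'\in(\gamma,1)$, and now every variable is removed from all sufficiently long clauses, hence lies in only finitely many clauses; Theorem~\ref{computablelll} then applies as a black box.
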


\begin{proof} Let us consider first a special case when each variable appears
only in finitely many clauses. Then we are in the situation covered by
Theorem~\ref{computablelll}, and we need only to choose the values of $z_i$.
These values will depend on the size of the clause: let us choose
  $$
z_i=2^{-\beta k}
  $$
for clauses of size $k$, where $\beta$ is some constant. In fact, any constant
between $\gamma$ and $1$ will work, so we can use, e.g., $\beta=(1+\gamma)/2$.
So we need to check (for clause $A$ of some size~$k$ and for some constant $\alpha<1$) that
  $$
2^{-k} \le \alpha 2^{-\beta k} \prod_{B\in N(A)} (1-2^{-\beta\#B})
  $$
(in fact we can omit the clause $B=A$ in the product, but this does not matter).
Note that for each of $k$ variables in $A$ there are at most $2^{\gamma m}$
clauses of size $m$ that involve it. So together $A$ has at most $k2^{\gamma m}$
neighbors of size $m$. So it is enough to show that
  $$
2^{-k} \le \alpha2^{-\beta k} \prod_{m\ge M} (1-2^{-\beta m})^{k2^{\gamma m}}
  $$
Taking $k$th roots (we replace $\sqrt[k]{\alpha}$ by $\alpha$, but this only makes the requirement stronger) and using that $(1-h)^s \ge 1-hs$, we see that it is enough
to show that
  $$
2^{-1} \le \alpha 2^{-\beta} \left(1-\sum_{m\ge M}2^{\gamma m} 2^{-\beta m}\right)
  $$
Since the series $\sum 2^{(\gamma-\beta)m}$ is converging, this is
guaranteed for large $M$ and for $\alpha$ sufficiently close to $1$.

So we have proven Corollary~\ref{variable-cnf} for the special case when
each variable appear only in finitely many clauses (and we can compute
the list of those clauses).

The general case is easily reducible to this special one. Indeed, fix some $\rho>0$ and delete from each clause $\rho$-fraction of its variables with minimal indices. The CNF becomes only harder to satisfy. But if $\rho$ is small enough, the condition of the theorem (the number of clauses with $m$ variables containing a given variable is bounded by $2^{\gamma m}$) is still true for some $\gamma'\in (\gamma,1)$, because the deletion makes the size of clauses only slightly smaller and decreases the set of clauses containing a given variable. And in this modified CNF each variable appears only in clauses of limited size (it is deleted from all large enough clauses).
\end{proof}

Let us note an important special case. Assume that $F$ is a set of binary strings that contains (for some fixed $\gamma<1$ and for every $m$) at most $2^{\gamma m}$ strings of length $m$. Then one can use LLL to prove the existence of an infinite (or bi-infinite) sequence $\omega$ and a number $M$ such that $\omega$ does not have substrings in $F$ of length greater than $M$. There are several proofs of this statement;  the most direct one is given in~\cite{miller-two-notes}; one may also use LLL or Kolmogorov complexity, see~\cite{rumyantsev-1,rumyantsev-2}.

Joseph Miller noted that his proof (given in~\cite{miller-two-notes}) can be used to show that for a decidable (computable) set $F$ with this property one can find a computable $\omega$ that avoids long substrings in $F$. Konstantin Makarychev extended this argument to bi-infinite strings (personal communication). Now this result becomes a special case of Corollary~\ref{variable-cnf}: places in the sequence correspond to variables, each forbidden string gives a family of clauses (one per each starting position), and there is at most $m2^{\gamma m}$ clauses of size $m$ that involve a given position (and this number is bounded by $2^{\gamma'm}$ for slightly bigger $\gamma'<1$ and large enough $m$).

Moreover, we can do the same for 2-dimensional case: having a decidable
set $F$ of rectangular patterns that contains at most $2^{\gamma m}$ different
patterns of size (=~area) $m$, one can find a number $M$ and a computable 2D configuration (a mapping $\mathbb{Z}^2\to\{0,1\}$) that does not contain patterns from $F$ of size $M$ or more. (It is not clear how to get this result directly, without using Moser--Tardos technique.)

Authors are grateful to Lance Fortnow who suggested to apply Moser--Tardos
technique to get the infinite computable version of LLL, and to the anonymous referees for their comments.

\end{document}